\documentclass[10pt, twocolumn, a4paper]{article}

\usepackage[utf8x]{inputenc} 
\usepackage{amsmath}
\usepackage{mathrsfs}
\usepackage{amsfonts}
\usepackage{amssymb}
\usepackage[dvips]{graphicx}
\usepackage{titletoc}
\usepackage{lmodern}
\usepackage[T1]{fontenc}
\usepackage{textcomp}
\usepackage{tikz}
\usepackage{color}
\usepackage{parskip}
\usepackage{etoolbox}
\usepackage{vmargin}

\setpapersize{A4}
\setmargins{1.4cm}       
{1cm}                        
{18.2cm}                      
{24cm}                    
{10pt}                           
{0.8cm}                           
{0pt}                             
{1.7cm}                           

\newtheorem{proposition}{Proposition}
\newenvironment{proof}{\par{\em{Proof: }}}{\hfill$\square$}
\newtheorem{remark}{Remark}


\newcommand{\sym}{ \mbox{\tt sym} }

\AtBeginEnvironment{array}{\setlength{\arraycolsep}{2pt}}
\AtBeginEnvironment{eqnarray}{\setlength{\arraycolsep}{2pt}}
\DeclareMathOperator{\sech}{sech}

\def\l2{{\mathcal L}_2}
\def\l2e{\call_{2e}}

\def\rea{\mathbb{R}}

\def\diag{\mbox{diag}}

\usepackage{color}



\begin{document}


\title{Saturated control without velocity measurements for planar robots with flexible joints} 

\author{ {\bf T. Wesselink, P. Borja, and J.M.A. Scherpen} \\
          Faculty of Science and Engineering,\\
        University of Groningen. Nijenborgh 4, 9747 AG Groningen, The Netherlands.\\
        e-mails:{\tt l.p.borja.rosales[j.m.a.scherpen]@rug.nl, t.c.wesselink@student.rug.nl}
        }
\date{}

\maketitle


\begin{abstract}
In this work, we propose a passivity-based controller that addresses the problem of set point regulation for planar robots with two links and flexible joints. 
Moreover, the controller is saturated and does not require velocity measurements. 
Additionally, we present experiments that corroborate the theoretical results of this note.
\end{abstract}

\textbf{Keywords.-}Port-Hamiltonian systems, passivity-based control, saturation, asymptotic stabilization.

\section{Introduction}
\label{sec:int}
Energy-based models, e.g., the Euler-Lagrange (EL) and port-Hamiltonian (pH) frameworks, have been extensively used to represent mechanical systems, 
see for example \cite{BULLO, spong1987, olfati2001, ORTbook, VAN, ORTtac}. One of the main advantages of these modeling approaches is that they provide 
a systematic procedure to obtain mathematical models that capture the nonlinear phenomena and preserve conservation laws 
present in physical systems.

A well-known property of mechanical systems is that they are passive, loosely speaking this means that these systems are not able to generate energy by themselves. 
Particularly, in the EL and pH representations, this passivity property can be verified by considering as storage function the total energy of the system.
Accordingly, a natural way to control passive systems is to design controllers that give a desired shape to the energy of the closed-loop system. 
This process is known as energy-shaping and it is the main idea of several passivity-based control (PBC) approaches.

In this work, we focus on the pH representation of planar robots with flexible joints, where the objective is to address the problem of set-point regulation for this class of systems. Additionally, we are interested in controllers that can overcome two common issues that arise during practical implementation, namely, the lack of sensors to measure the velocities and limitations in the actuators, particularly, the necessity of saturated signals to ensure the safety of the equipment. Following these ideas, 
the main contribution of this work is the design of a controller that solves the set-point regulation problem for planar robots with flexible joints. The aforementioned controller 
has the following appealing properties:
\begin{itemize}
 \item The closed-loop system preserves the pH structure and, consequently, the passivity property.
 \item The control design does not require the solution of partial differential equations (PDEs).
 \item The control signals are constrained to a desired interval. Thus, for implementation purposes, it is not necessary to include additional saturation blocks to prevent damage to the motors.
 \item The control design only requires position measurements. Therefore, the controller can be implemented without the necessity of filters or observers that estimate the velocities.
\end{itemize}

The outline of this paper is as follows. First, we provide the model of the system and the problem formulation in Section \ref{sec:prem}. 
Section \ref{sec:cd} is devoted to the control design, where we present two saturated controllers and the experimental results derived from their implementation. Finally, we give closure to this note with some concluding
remarks and future work in Section \ref{sec:cfw}.\\[.2cm]
{\bf Notation:} We denote the $n \times n$ identity matrix as $I_n$, and the $n \times s$ matrix of zeros as $\mathbf{0}_{n \times s}$. Consider the vector $x \in \rea^n$, the square matrix $A \in \rea^{n \times n}$, the function $f:\rea^n \to \rea$, and the mapping $F:\rea^n \to \rea^m$. Then: 
we denote the $i-th$ element of $x$ as $x_{i}$. The symmetric part of $A$ is given by $\sym\{A\}:=\frac{1}{2}(A+A^{\top})$. When $A=A^{\top}$,
$A$ is said to be positive definite, $A>0$, or positive semi-definite, $A\geq 0$, if and only if $x^{\top}Ax>0$, $x^{\top}Ax\geq0$ for all $x\neq \mathbf{0}_{n}$, respectively. If $A>0$, we denote the Euclidean weighted-norm as $\lVert x \rVert^2_A:=x^\top A x$. We define the differential operator $\nabla_{x} f:=\left(\frac{\displaystyle \partial f }{\displaystyle \partial x}\right)^\top$ and $\nabla^2_x f:=\frac{\displaystyle \partial^2 f }{\displaystyle \partial x^2}$. 
For $F$, we define the $ij$-th element of its $n \times m$ Jacobian matrix as $(\nabla_x F)_{ij}:=\frac{\displaystyle \partial F_j}{\displaystyle \partial x_i}$.
When clear from the context the subindex in $\nabla$ is omitted. For any $F$ and the distinguished element $x_{*}  \in \rea^n$, we define the constant matrix $F_* :=F(x_* )$. 
All mappings are supposed smooth enough.
\section{Model and problem setting}
\label{sec:prem}

The system to be controlled --depicted in Fig. \ref{fig:planar}-- consists of two links, each one attached to a motor shaft through a spring. As is stated above, we adopt a pH model to characterize the behavior of the system.
Therefore, we consider as state vector the positions $q\in \rea^{4}$ and the momenta $p\in \rea^{4}$ related to the elements of the robot arm, namely,

\begin{equation}
\begin{array}{rl}
  q=\begin{bmatrix}
    q_{l} \\ q_{m}
   \end{bmatrix}, & p=\begin{bmatrix}
                       p_{l} \\ p_{m}
                      \end{bmatrix}
\end{array}
\end{equation} 
where the vectors $q_{l}\in\rea^{2}$, $q_{m}\in\rea^{2}$ denote the angular position of the links and the motors, respectively; while, $p_{l}\in\rea^{2}$ represent the momenta of the 
links, and the momenta of the motors are given by $p_{m}\in\rea^{2}$. Hence, the system dynamics is expressed as\footnote{For an alternative pH representation of this system, we refer the reader to \cite{jardon}.} 
\begin{equation}
\begin{array}{rcl}
  \begin{bmatrix}
  \dot{q} \\[.1cm] \dot{p}
 \end{bmatrix}
&=&\begin{bmatrix}
  \mathbf{0}_{4\times 4} & I_{4} \\[.1cm] -I_{4} & -R_{2} 
 \end{bmatrix}\begin{bmatrix}
               \nabla_{q} H(q,p) \\[.1cm] \nabla_{p} H(q,p)
              \end{bmatrix}
+\begin{bmatrix}
                \mathbf{0}_{4} \\[.1cm] B
               \end{bmatrix}u
\\[.5cm]
 H(q,p)&=&\frac{1}{2}p^{\top}M^{-1}(q_{l_{2}})p+\frac{1}{2}\lVert q_{l}-q_{m}\rVert_{K_{s}}^{2}
 \end{array}\label{sys}
\end{equation} 
with
\begin{equation}
 \begin{array}{rcl}
  M(q_{l_{2}})&=&\begin{bmatrix}
                M_{l}(q_{l_{2}}) & \mathbf{0}_{2\times 2} \\ \mathbf{0}_{2\times 2} & M_{m}  
               \end{bmatrix} \\ 
  M_{l}(q_{l_{2}})&=& \begin{bmatrix}
                       a_{1}+a_{2}+2b\cos(q_{l_{2}}) & a_{2}+b\cos(q_{l_{2}}) \\  a_{2}+b\cos(q_{l_{2}}) & a_{2}
                      \end{bmatrix} \\ 
  R_{2}&=&\begin{bmatrix}
           D_{l} & \mathbf{0}_{2\times 2} \\ \mathbf{0}_{2\times 2} & D_{m}
          \end{bmatrix}        \\ 
  M_{m}&=&\diag\{ \mathcal{I}_{m_{1}}, \mathcal{I}_{m_{2}} \} \\
  D_{l}&=& \diag\{ D_{l_{1}}, D_{l_{2}} \}\\
  D_{m}&=& \diag\{  D_{m_{1}}, D_{m_{2}}  \}\\
  K_{s}&=& \diag\{ k_{s_{1}}, k_{s_{2}}  \}\\
  B&=& \begin{bmatrix}
        \mathbf{0}_{2} \\ I_{2}
       \end{bmatrix}
 \end{array}
\end{equation} 
where $a_{1}, \ a_{2}$ and $b$ are constants related to the moment of inertia (MoI) of the links; and $u\in\rea^{2}$ is the input vector which corresponds to the torques of the motors.
All the parameters are positive and their physical meaning is explained in Table \ref{parameters} in the Appendix of this note.

\begin{figure}[h!]
 \centering
 \includegraphics[width=0.2\textwidth]{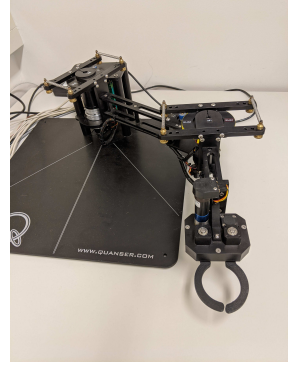}
 \caption{The planar robot with flexible joints by Quanser.}
 \label{fig:planar}
\end{figure}

\subsubsection*{Problem setting}
the objective of this work is to stabilize system \eqref{sys} to a constant point that belongs to the set
\begin{equation}
 \mathcal{E}:=\{ q\in\rea^{4}\mid q_{l}=q_{m}, p=\mathbf{0}_{4} \}.\label{eq}
\end{equation} 
Furthermore, we consider that only measurements of the positions $q$ are available; and the elements of the controller are constrained to given intervals, that is,
$u(t)\in \mathcal{U}$ for all $t\geq 0$, where
$\mathcal{U}:=[-u_{\tt max_{1}},u_{\tt max_{1}}]\times [-u_{\tt max_{2}},u_{\tt max_{2}}]$.

\section{Control design}
\label{sec:cd}
In this section, we present three controllers that stabilize the planar robot to a reference. The first controller is based on the PI-PBCs reported in \cite{BORCISORT}, which is used as a starting point towards the development of the saturated controllers. The second controller satisfies the requirements established in Section \ref{sec:prem}, nonetheless, the experiments exhibit a steady state error. Finally, the third controller is an extension of the previous one, where an integral-like action is added to eliminate the steady state error, this is corroborated in the experimental results.
\subsection{Preliminary PI controller}
In \cite{BORCISORT}, a constructive procedure to stabilize pH systems is proposed, an advantage of this approach over other PBC techniques is that the control law is obtained without the necessity of solving PDEs. 
Furthermore, the controllers derived from this approach can be interpreted as PI regulators, where the feedback signal is the passive output of the system. In Proposition \ref{prop:PI} we provide a modified PI that stabilizes system \eqref{sys} to a desired reference, and where the passive output is given by the velocities of the motors. 
Then, based on this PI controller, in subsequent sections we develop a control law that satisfies the requirements imposed in Section \ref{sec:prem}.
\begin{proposition}\label{prop:PI}
 Consider system \eqref{sys} in closed-loop with the controller\footnote{We recall that $\nabla_{p}H(q,p)=M^{-1}(q)p=\dot{q}$.}
 \begin{equation}
  u=-K_{P_{m}}\dot{q}_{m}-K_{I}(q_{m}-q_{*})-K_{P_{l}}\dot{q}_{l} \label{upi}
 \end{equation} 
 where $q_{*}\in \rea^{2}$ is the desired position of the links, and the matrices $K_{P_{m}},K_{P_{l}},K_{I}\in \rea^{2 \times 2}$ verify
 \begin{equation}
\begin{array}{rcl}
 K_{P_{m}}>0, \; \; \; K_{I}&>&0 \\[.1cm]
 D_{m}+K_{P_{m}}-\frac{1}{4}K_{P_{l}}^{\top}D_{l}^{-1}K_{P_{l}}&>&0.
\end{array}\label{pigains}
 \end{equation} 
 Then, the following statements hold true.
 \begin{itemize}
   \item [(i)] The closed-loop system admits a pH representation, that is
  \begin{equation}
   \begin{bmatrix}
    \dot{q} \\[.1cm] \dot{p} 
   \end{bmatrix}
=\begin{bmatrix}
  \mathbf{0}_{4\times 4}  & I_{4}  \\[.1cm]
   -I_{4} & J_{\tt PI_{2}}-R_{\tt PI_{2}} 
 \end{bmatrix}\begin{bmatrix}
               \nabla_{q}H_{\tt PI}(q,p) \\[.1cm] \nabla_{p}H_{\tt PI}(q,p) 
              \end{bmatrix} \label{clpi}
  \end{equation} 
  where
  \begin{eqnarray}
  \nonumber  R_{\tt PI_{2}}&:=&\begin{bmatrix}
                     D_{l} & \frac{1}{2}K_{P_{l}}^{\top} \\[.1cm] \frac{1}{2}K_{P_{l}} & D_{m}+K_{P_{m}} 
                    \end{bmatrix}, \\[.2cm] 
\nonumber J_{\tt PI_{2}}&:=&\frac{1}{2}\begin{bmatrix}
                 \mathbf{0}_{2\times 2} & K_{P_{l}}^{\top} \\[.1cm] -K_{P_{l}} & \mathbf{0}_{2\times 2}.
                \end{bmatrix}\\[.2cm]
H_{\tt PI}(q,p)&:=&H(q,p)+\frac{1}{2}\lVert q_{m}-q_{*}\rVert_{K_{I}}^{2}. \label{Hpi}
  \end{eqnarray}
  \item [(ii)] The point $$x_{*}:=(q_{l},q_{m},p_{l},p_{m})=(q_{*},q_{*},\mathbf{0}_{2},\mathbf{0}_{2})$$ is an asymptotically stable equilibrium of the closed-loop system with Lyapunov function $H_{\tt PI}(q,p)$.  
 \end{itemize}
\end{proposition}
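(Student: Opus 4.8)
The plan is to prove (i) by a direct substitution and (ii) in two stages: Lyapunov stability from $\dot H_{\tt PI}\le 0$, followed by asymptotic stability via LaSalle's invariance principle.

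For (i), I would use the identity $\nabla_p H=M^{-1}p=\dot q$, so that $\dot q_l,\dot q_m$ in \eqref{upi} are simply the lower components of $\nabla_p H$. Substituting \eqref{upi} into the momentum block of \eqref{sys}, the link-momentum equation is unchanged, $\dot p_l=-\nabla_{q_l}H-D_l\dot q_l$, while the motor-momentum equation becomes $\dot p_m=-\nabla_{q_m}H-(D_m+K_{P_m})\dot q_m-K_{P_l}\dot q_l-K_I(q_m-q_*)$. I then absorb the term $K_I(q_m-q_*)$ into the energy by defining $H_{\tt PI}$ as in \eqref{Hpi}: this leaves $\nabla_{q_l}H$ and $\nabla_p H$ untouched and replaces $\nabla_{q_m}H$ by $\nabla_{q_m}H_{\tt PI}$. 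The remaining $\dot q$-dependent feedback is collected into the single $4\times 4$ matrix $\left[\begin{smallmatrix} -D_l & \mathbf{0}_{2\times2} \\ -K_{P_l} & -(D_m+K_{P_m})\end{smallmatrix}\right]$ acting on $\nabla_p H_{\tt PI}=\dot q$; splitting it into $\sym$ and $\skm$ parts reproduces exactly $-R_{\tt PI_2}$ and $J_{\tt PI_2}$, and together with the trivial position block $\dot q=I_4\nabla_p H_{\tt PI}$ this gives \eqref{clpi}. The only delicate bookkeeping is that the off-diagonal coupling $K_{P_l}$ is distributed half into $R_{\tt PI_2}$ and half into $J_{\tt PI_2}$.

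For (ii), I first verify that $x_*$ is an equilibrium: at $p=\mathbf{0}_4$ we have $\nabla_p H_{\tt PI}(x_*)=\mathbf{0}_4$, and since the kinetic term is quadratic in $p$ its $q$-gradient also vanishes there, so $\nabla_q H_{\tt PI}(x_*)$ reduces to $K_s(q_l-q_m)$ in the $q_l$-block and $-K_s(q_l-q_m)+K_I(q_m-q_*)$ in the $q_m$-block, both zero at $q_l=q_m=q_*$; hence the right-hand side of \eqref{clpi} vanishes at $x_*$. Next, $H_{\tt PI}$ is a valid Lyapunov candidate: it is the sum of $\tfrac12 p^\top M^{-1}(q_{l_2})p$, $\tfrac12\lVert q_l-q_m\rVert_{K_s}^2$ and $\tfrac12\lVert q_m-q_*\rVert_{K_I}^2$, so with $M^{-1}>0$ and $K_s,K_I>0$ it is nonnegative, radially unbounded, and vanishes only at $x_*$. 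Differentiating along \eqref{clpi} and using $J_{\tt PI_2}=-J_{\tt PI_2}^\top$, the cross terms cancel and $\dot H_{\tt PI}=-\nabla_p H_{\tt PI}^\top R_{\tt PI_2}\nabla_p H_{\tt PI}=-\lVert\dot q\rVert_{R_{\tt PI_2}}^2$. A Schur-complement argument on $R_{\tt PI_2}$ with respect to the positive definite block $D_l$, combined with the third inequality in \eqref{pigains}, yields $R_{\tt PI_2}>0$; hence $\dot H_{\tt PI}\le 0$, with equality if and only if $\dot q=\mathbf{0}_4$. This already proves stability of $x_*$.

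For asymptotic stability I would invoke LaSalle's invariance principle, using that the sublevel sets of $H_{\tt PI}$ are compact and invariant. On the largest invariant set contained in $\{\dot q=\mathbf{0}_4\}$ we have $\dot q\equiv\mathbf{0}_4$, hence $p\equiv\mathbf{0}_4$ (since $\dot q=M^{-1}p$), hence $\dot p\equiv\mathbf{0}_4$, hence $\nabla_q H_{\tt PI}\equiv\mathbf{0}_4$; reading off its two blocks forces first $q_l=q_m$ and then $q_m=q_*$, so this invariant set is the singleton $\{x_*\}$, and asymptotic stability follows. I expect the main obstacle to be exactly this last step — upgrading from the merely negative-\emph{semi}definite $\dot H_{\tt PI}$ to asymptotic stability — together with confirming $R_{\tt PI_2}>0$ under \eqref{pigains}; the rest is routine substitution and sign-checking.
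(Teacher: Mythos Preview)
Your proposal is correct and follows essentially the same route as the paper: direct substitution to obtain \eqref{clpi}, a Schur-complement check that \eqref{pigains} makes $R_{\tt PI_2}>0$, the dissipation identity $\dot H_{\tt PI}=-\lVert\nabla_p H_{\tt PI}\rVert_{R_{\tt PI_2}}^{2}$, and the invariance argument $\dot H_{\tt PI}=0\Rightarrow p=\mathbf{0}_4\Rightarrow\dot p=\mathbf{0}_4\Rightarrow q_l=q_m=q_*$ (the paper names this Barbashin's theorem rather than LaSalle). The only cosmetic difference is that the paper certifies the minimum of $H_{\tt PI}$ via $(\nabla H_{\tt PI})_*=\mathbf{0}_8$ and $(\nabla^2 H_{\tt PI})_*>0$, whereas you use the cleaner observation that $H_{\tt PI}$ is a sum of three nonnegative terms vanishing simultaneously only at $x_*$.
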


\begin{proof}
 Note that, based on a Schur complement analysis,
 \eqref{pigains} implies that $R_{\tt PI_{2}}>0$. Moreover,
 \begin{equation}
  J_{\tt PI_{2}}-R_{\tt PI_{2}}= \begin{bmatrix}
                                  -D_{l} & \mathbf{0}_{2\times 2} \\
                                  -K_{P_{l}} & -D_{m}-K_{P_{m}}
                                 \end{bmatrix}.
 \end{equation} 
 Hence, replacing \eqref{upi} in \eqref{sys} we get\footnote{Due to space constraints, in the sequel, during the development of the proofs, the arguments of the functions are omitted when they are clear.}
 \begin{equation}
  \begin{array}{rcl}
    \dot{p}_{l}&=& -\nabla_{q_{l}}H-D_{l}\nabla_{p_{l}}H
    \\ \dot{p}_{m}&=&
-\nabla_{q_{m}}H-(D_{l}+K_{P_{m}})\nabla_{p_{m}}H-K_{P_{l}}\nabla_{p_{l}}H\\&&-K_{I}(q_{m}-q_{*}).
  \end{array}\label{dppi}
 \end{equation} 
 Therefore, since 
 \begin{equation*}
  \begin{array}{rl}
\nabla_{q}H_{\tt PI}=\nabla_{q}H+\begin{bmatrix}\mathbf{0}_{2} \\  K_{I}(q_{m}-q_{*})\end{bmatrix}, &  \nabla_{p}H_{\tt PI}=\nabla_{p}H,
  \end{array}
 \end{equation*} 
  we claim that (i) holds true. To proof (ii), note that
  \begin{equation}
   \dot{H}_{\tt PI}=\left[
(\nabla_{q} H_{\tt PI})^{\top} \ (\nabla_{p} H_{\tt PI})^{\top}                     
                    \right]
\begin{bmatrix}
 \dot{q} \\ \dot{p}
\end{bmatrix}=-\lVert \nabla_{p} H_{\tt PI} \rVert^{2}_{R_{\tt PI_{2}}},\label{dHpi}
  \end{equation} 
which implies that ${H}_{\tt PI}$ is non-increasing. Moreover, some straightforward computations show that
\begin{equation}
 \begin{array}{l}
  (\nabla H_{\tt PI})_{*} = \mathbf{0}_{8} \\[.2cm]
  (\nabla^{2} H_{\tt PI})_{*}=\begin{bmatrix}
                               K_{\tt PI} & \mathbf{0}_{4\times 4} \\
                               \mathbf{0}_{4\times 4} & M_{*}
                              \end{bmatrix}>0
 \end{array}\label{minPI}
\end{equation}
where
\begin{equation*}
 K_{\tt PI}:=\begin{bmatrix}
             K_{s} & -K_{s} \\ -K_{s} & K_{s}+K_{I}
            \end{bmatrix}.
\end{equation*} 
Therefore, ${H}_{\tt PI}$ has a minimum at $x_{*}$. Thus, the stability of the equilibrium is proved by applying the direct method of Lyapunov, see \cite{KHA}. Furthermore, 
\begin{equation}
 \begin{array}{l}
  \dot{H}_{\tt PI}=0 \Longleftrightarrow p=\mathbf{0}_{4} \implies \dot{p}=\mathbf{0}_{4}
  \\[.1cm]\implies \left. \begin{array}{l}K_{s}(q_{l}-q_{m})=\mathbf{0}_{2}  \\  K_{I}(q_{m}-q_{*})=\mathbf{0}_{2}                               \end{array} 
 \right\rbrace \implies q_{m}=q_{l}=q_{*}.
 \end{array}
\end{equation} 
Hence, the asymptotic stability property is proved by Barbashin Theorem, see \cite{KHA}.
  \end{proof}
\begin{remark}
 The natural damping of the system ensures that the inclusion of the term $-K_{P_{l}}\dot{q_{l}}$ is possible at the same time that the pH structure is preserved. On one hand, the aforementioned term injects
 damping to the links, which attenuate the oscillations. On the other hand, the structure preservation is desirable for analysis purposes, and potentially, physical interpretation of the closed-loop system.
\end{remark} 
\subsection{Saturated control without velocity measurements}
Although the controller defined in \eqref{upi} stabilizes the system at the desired point, it clearly requires information of the velocities. 
Moreover, it is not possible to ensure that the control signals remain in the range of operation of the motors. 
Accordingly, to overcome the aforementioned issues, we propose two modifications to the control law:
\begin{itemize}
 \item To replace the integral term $K_{I}(q_{m}-q_{*})$ with a saturated function.
 \item To inject damping without the necessity of measuring the velocities. 
\end{itemize}
A method to inject damping without velocity measurements, for mechanical systems, is proposed in \cite{dirksz}. The main idea of this methodology is to propose a virtual state that is linearly related to the positions, 
then, this new state is used to inject damping into the closed-loop system. Proposition \ref{prop:sat} establishes one of the main contributions of this note, where a combination of the damping injection approach 
reported in \cite{dirksz} with the PI of Proposition \ref{prop:PI} is proposed. 
\begin{proposition}\label{prop:sat}
Let the controller state vectors $x_{c_{l}},x_{c_{m}}\in \rea^{2}$. Define the functions\footnote{Due to space constraints, the arguments $(q_{l},x_{c_{l}}),(q_{m},x_{c_{m}})$ of functions $z_{l},z_{m}$, respectively, are omitted in the definitions of $\Phi_{l}(z_{l}(q_{l},x_{c_{l}}),\Phi_{m}(z_{m}(q_{m},x_{c_{m}})$.}
\begin{equation}
 \begin{array}{rl}
  z_{l}(q_{l},x_{c_{l}}):=&q_{l}-q_{*}+x_{c_{l}}, \\ z_{m}(q_{m},x_{c_{m}}):=&q_{m}-q_{*}+x_{c_{m}}, \\ 
  \Phi_{l}(z_{l}):=&\displaystyle\sum_{i=1}^{2}\frac{\alpha_{l_{i}}}{\beta_{l_{i}}}\ln(\cosh(\beta_{l_{i}}z_{l_{i}}))  \\
  \Phi_{m}(z_{m}):=&\displaystyle\sum_{i=1}^{2}\frac{\alpha_{m_{i}}}{\beta_{m_{i}}}\ln(\cosh(\beta_{m_{i}}z_{m_{i}}))
 \end{array}\label{zPhi}
\end{equation} 
where $\alpha_{l_{i}}, \alpha_{m_{i}}, \beta_{l_{i}}, \beta_{m_{i}}\in \rea_{>0}$.
Consider the dynamics
\begin{equation}
 \begin{array}{rcl}
 \dot{x}_{c_{l}} &=& -R_{c_{l}}\nabla_{x_{c_{l}}}\Phi_{l}(z_{l}(q_{l},x_{c_{l}})) \\[.1cm] \dot{x}_{c_{m}}&=&-R_{c_{m}}\left[  \nabla_{x_{c_{m}}}\Phi_{m}(z_{m}(q_{m},x_{c_{m}}))+K_{c}x_{c_{m}}\right],
 \end{array}\label{xc}
\end{equation} 
where $R_{c_{l}}, R_{c_{m}}, K_{c}\in \rea^{2\times 2}$ are positive definite constant matrices verifying
\begin{equation}
R_{c_{l}}-\frac{1}{4} \left(D_{l}^{-1}+D_{m}^{-1}\right)>0.  \label{schurzeta}
\end{equation} 
Consider the control law 
\begin{equation}
u=-\nabla_{z_{l}}\Phi_{l}(z_{l}(q_{l},x_{c_{l}}))-\nabla_{z_{m}}\Phi_{m}(z_{m}(q_{m},x_{c_{m}})). \label{uzeta}
\end{equation} 
Then:
\begin{itemize}
 \item[(i)] The elements of the input vector $u$ are saturated.
 \item[(ii)] Consider system \eqref{sys} in closed-loop with \eqref{uzeta}. Hence, the dynamics of the augmented state space $\zeta=[q^{\top},p^{\top},x_{c_{l}}^{\top},x_{c_{m}}^{\top}]^{\top}$ admit a pH representation.
 \item[(iii)] The point $\zeta_{*}=(x_{*},\mathbf{0}_{2},\mathbf{0}_{2})$ is an asymptotically stable equilibrium of the closed loop system with Lyapunov function
 \begin{equation}
 \begin{array}{rcl}
  H_{\zeta}(\zeta)&=&H(q,p)+\Phi_{l}(z_{l}(q_{l},x_{c_{l}}))\\&&+\Phi_{m}(z_{m}(q_{m},x_{c_{m}}))+\frac{1}{2}\lVert x_{c_{m}}\rVert^{2}_{K_{c}}. \label{Hzeta}
 \end{array}
 \end{equation} 
\end{itemize}
\end{proposition}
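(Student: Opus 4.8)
The plan is to prove items (i)--(iii) in turn.

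For (i): differentiating \eqref{zPhi} and using $\tfrac{d}{ds}\ln\cosh(\beta s)=\beta\tanh(\beta s)$ gives $(\nabla_{z_l}\Phi_l)_i=\alpha_{l_i}\tanh(\beta_{l_i}z_{l_i})$ and $(\nabla_{z_m}\Phi_m)_i=\alpha_{m_i}\tanh(\beta_{m_i}z_{m_i})$, so by \eqref{uzeta} $u_i=-\alpha_{l_i}\tanh(\beta_{l_i}z_{l_i})-\alpha_{m_i}\tanh(\beta_{m_i}z_{m_i})$ and $|u_i|<\alpha_{l_i}+\alpha_{m_i}$ for all $t\ge0$; choosing the gains so that $\alpha_{l_i}+\alpha_{m_i}\le u_{\tt max_{i}}$ then yields $u(t)\in\mathcal{U}$, which proves the saturation claim.

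For (ii): since $z_l,z_m$ in \eqref{zPhi} are affine with unit Jacobian in each of their arguments, the chain rule gives $\nabla_{x_{c_l}}\Phi_l=\nabla_{q_l}\Phi_l=\nabla_{z_l}\Phi_l$ and $\nabla_{x_{c_m}}\Phi_m=\nabla_{q_m}\Phi_m=\nabla_{z_m}\Phi_m$, whence the gradient of $H_\zeta$ in \eqref{Hzeta} is $\nabla_p H_\zeta=\nabla_p H$, $\nabla_{q_l}H_\zeta=\nabla_{q_l}H+\nabla_{z_l}\Phi_l$, $\nabla_{q_m}H_\zeta=\nabla_{q_m}H+\nabla_{z_m}\Phi_m$, $\nabla_{x_{c_l}}H_\zeta=\nabla_{z_l}\Phi_l$ and $\nabla_{x_{c_m}}H_\zeta=\nabla_{z_m}\Phi_m+K_c x_{c_m}$. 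Substituting \eqref{uzeta} into \eqref{sys}, adjoining \eqref{xc} and eliminating $\nabla H$ in favour of $\nabla_\zeta H_\zeta$ --- the $\nabla_{z_m}\Phi_m$ coming from $\nabla_{q_m}H$ in $\dot p_m$ cancelling the one from the control law --- the closed loop takes the form $\dot\zeta=(\mathcal{J}_\zeta-\mathcal{R}_\zeta)\nabla_\zeta H_\zeta$. The only couplings beyond the standard position--momentum interconnection are those of $p_l,p_m$ with $x_{c_l}$, appearing as $I_2,-I_2$ in the raw equations, which I would split as $\mp\tfrac12 I_2$ between the skew-symmetric $\mathcal{J}_\zeta$ and the symmetric $\mathcal{R}_\zeta$; with this choice $\mathcal{R}_\zeta$ vanishes on the positions, has the decoupled positive definite block $R_{c_m}$, and on $(p_l,p_m,x_{c_l})$ equals $\diag(D_l,D_m)$ bordered by $-\tfrac12 I_2$ (rows/columns $p_l$--$x_{c_l}$), $+\tfrac12 I_2$ ($p_m$--$x_{c_l}$) and $R_{c_l}$ on the diagonal. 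A Schur complement of this block with respect to $\diag(D_l,D_m)$ shows it is positive definite iff $R_{c_l}-\tfrac14(D_l^{-1}+D_m^{-1})>0$, which is exactly \eqref{schurzeta}; hence $\mathcal{R}_\zeta\ge0$ and (ii) follows.

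For (iii): along the closed loop $\dot H_\zeta=(\nabla_\zeta H_\zeta)^\top(\mathcal{J}_\zeta-\mathcal{R}_\zeta)\nabla_\zeta H_\zeta=-\lVert\nabla_\zeta H_\zeta\rVert^2_{\mathcal{R}_\zeta}\le0$. Using $\tanh(0)=0$ and $z_l=z_m=\mathbf{0}_{2}$ at $\zeta_*$, one has $\nabla_\zeta H_\zeta(\zeta_*)=\mathbf{0}_{12}$; moreover $\nabla^2_\zeta H_\zeta(\zeta_*)$ is block diagonal, equal to $M_*^{-1}>0$ on the momenta and, on $(q,x_{c_l},x_{c_m})$, to the Hessian of $\tfrac12\lVert\tilde q_l-\tilde q_m\rVert^2_{K_s}+\tfrac12\lVert\tilde q_l+x_{c_l}\rVert^2_{\Lambda_l}+\tfrac12\lVert\tilde q_m+x_{c_m}\rVert^2_{\Lambda_m}+\tfrac12\lVert x_{c_m}\rVert^2_{K_c}$, with $\tilde q_\bullet:=q_\bullet-q_*$, $\Lambda_l:=\diag(\alpha_{l_1}\beta_{l_1},\alpha_{l_2}\beta_{l_2})$ and $\Lambda_m:=\diag(\alpha_{m_1}\beta_{m_1},\alpha_{m_2}\beta_{m_2})$; a sum-of-squares argument (the four weighted squares must all vanish, forcing $\tilde q_l=\tilde q_m$, $x_{c_l}=-\tilde q_l$, $x_{c_m}=-\tilde q_m$ and then $x_{c_m}=\mathbf{0}_{2}$, hence every variable zero) shows this block is positive definite, so $H_\zeta$ has a strict minimum at $\zeta_*$ and Lyapunov's direct method gives stability. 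For attractivity I would invoke the invariance principle: on $\{\dot H_\zeta=0\}$, positive definiteness of $\mathcal{R}_\zeta$ restricted to $(p,x_{c_l},x_{c_m})$ forces $\nabla_p H_\zeta=\mathbf{0}_{4}$ (so $p=\mathbf{0}_{4}$), $\nabla_{z_l}\Phi_l=\mathbf{0}_{2}$ (so $z_l=\mathbf{0}_{2}$) and $\nabla_{x_{c_m}}H_\zeta=\mathbf{0}_{2}$; since this set is invariant, $\dot p=\mathbf{0}_{4}$ there, whence $\dot p_l=\mathbf{0}_{2}$ reduces to $K_s(q_l-q_m)=\mathbf{0}_{2}$, i.e. $q_l=q_m$, and then $\dot p_m=\mathbf{0}_{2}$ reduces to $\nabla_{z_m}\Phi_m=\mathbf{0}_{2}$, i.e. $z_m=\mathbf{0}_{2}$; finally $\nabla_{x_{c_m}}H_\zeta=\nabla_{z_m}\Phi_m+K_c x_{c_m}=\mathbf{0}_{2}$ yields $x_{c_m}=\mathbf{0}_{2}$, and back-substitution in $z_m=\mathbf{0}_{2}$, $z_l=\mathbf{0}_{2}$ gives $q_m=q_l=q_*$ and $x_{c_l}=\mathbf{0}_{2}$. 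Thus $\{\zeta_*\}$ is the largest invariant set in $\{\dot H_\zeta=0\}$, and asymptotic stability follows by Barbashin's theorem, as for Proposition~\ref{prop:PI}.

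The step I expect to be the main obstacle is the bookkeeping in (ii): arranging the closed-loop equations so that $\mathcal{J}_\zeta$ comes out exactly skew-symmetric and the semidefiniteness of $\mathcal{R}_\zeta$ collapses, through a single Schur complement, to the hypothesis \eqref{schurzeta}. Everything else --- computing the gradients, the sum-of-squares positivity of $\nabla^2_\zeta H_\zeta(\zeta_*)$, and the invariant-set analysis --- is routine once the pH form is in place.
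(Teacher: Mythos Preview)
Your proposal is correct and follows essentially the same route as the paper: the same tanh computation for (i), the same $\dot\zeta=F_\zeta\nabla H_\zeta$ structure for (ii) with the $p$--$x_{c_l}$ coupling split into skew and symmetric parts so that semidefiniteness reduces via a single Schur complement to \eqref{schurzeta}, and the same Lyapunov plus Barbashin/LaSalle chain for (iii). The only cosmetic difference is that you argue positive definiteness of $(\nabla^2 H_\zeta)_*$ by a sum-of-squares observation, whereas the paper writes out the Hessian blocks explicitly; the invariant-set chain you give coincides with the paper's almost step for step.
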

\begin{proof}
To proof (i), note that
\begin{equation}
 \begin{array}{rcl}
  \nabla_{z_{l}}\Phi_{l}&=&  \begin{bmatrix}
  \alpha_{l_{1}}\tanh(\beta_{l_{1}}z_{l_{1}}) \\ \alpha_{l_{2}}\tanh(\beta_{l_{2}}z_{l_{2}})
                                                    \end{bmatrix}\\[.35cm] \nabla_{z_{m}}\Phi_{m}&=& \begin{bmatrix}
  \alpha_{m_{1}}\tanh(\beta_{m_{1}}z_{m_{1}}) \\ \alpha_{m_{2}}\tanh(\beta_{m_{2}}z_{m_{2}})
                                                    \end{bmatrix}. 
 \end{array}
\end{equation} 
Therefore, the control law \eqref{uzeta} reduces to
\begin{equation}
 \begin{bmatrix}
  u_{1} \\ u_{2}
 \end{bmatrix}
=\begin{bmatrix}
  \alpha_{l_{1}}\tanh(\beta_{l_{1}}z_{l_{1}})+\alpha_{m_{1}}\tanh(\beta_{m_{1}}z_{m_{1}}) \\ \alpha_{l_{2}}\tanh(\beta_{l_{2}}z_{l_{2}})+\alpha_{m_{2}}\tanh(\beta_{m_{2}}z_{m_{2}})
\end{bmatrix}.
\end{equation} 
Thus,
\begin{equation}
  -( \alpha_{l_{i}}+\alpha_{m_{i}})\leq u_{i} \leq  \alpha_{l_{i}}+\alpha_{m_{i}}.
 \end{equation} 
To proof (ii) note that, since
\begin{equation}
 \nabla_{q_{l}}z_{l}=\nabla_{x_{c_{l}}}z_{l}=\nabla_{q_{m}}z_{m}=\nabla_{x_{c_{m}}}z_{m}=I_{2},
\end{equation} 
from the chain rule we get
\begin{equation}
 \begin{array}{l}
  \nabla_{q_{l}}\Phi_{l}=\nabla_{x_{c_{l}}}\Phi_{l}=\nabla_{x_{c_{l}}}H_{\zeta}=\nabla_{z_{l}}\Phi_{l}
, \\[0.35cm] \nabla_{q_{m}}\Phi_{m}=\nabla_{x_{c_{m}}}\Phi_{m}=\nabla_{z_{m}}\Phi_{m}.
 \end{array}
\end{equation}
Therefore, in closed-loop, the dynamics of the momenta vector take the form 
\begin{equation}
 \begin{array}{rcl}
  \dot{p}_{l}&=&-\nabla_{q_{l}}H-D_{l}\nabla_{p}H\\&=&-\nabla_{q_{l}}H_{\zeta}-D_{l}\nabla_{p}H_{\zeta}+\nabla_{x_{c_{l}}}H_{\zeta} \\
  \dot{p}_{m}&=&-\nabla_{q_{m}}H-D_{m}\nabla_{p_{m}}H+u \\
  &=&-\nabla_{q_{m}}H_{\zeta}-D_{m}\nabla_{p_{m}}H_{\zeta}-\nabla_{x_{c_{l}}}H_{\zeta}.
 \end{array}\label{dpzeta}
\end{equation} 
Moreover, \eqref{xc} can be rewritten as
\begin{equation}
\begin{array}{rcl}
 \dot{x}_{c_{l}} &=& -R_{c_{l}}\nabla_{x_{c_{l}}}H_{\zeta} \\[.1cm] \dot{x}_{c_{m}}&=&-R_{c_{m}}\nabla_{x_{c_{m}}}H_{\zeta}. \label{dxczeta}
 \end{array} 
\end{equation}  
Hence, from \eqref{dpzeta} and \eqref{dxczeta}, the closed-loop system takes the form
\begin{equation}
\dot{\zeta}=F_{\zeta}\nabla H_{\zeta}, \label{clzeta}
\end{equation} 
with,
\begin{equation}
 F_{\zeta}:=\begin{bmatrix}
             \mathbf{0}_{4\times 4} & I_{4} & \mathbf{0}_{4\times 2} & \mathbf{0}_{4\times 2} \\[.1cm] -I_{4} & -R_{2} & \Gamma_{\zeta} & \mathbf{0}_{4\times 2} \\[.1cm] \mathbf{0}_{2\times 4} & \mathbf{0}_{2\times 4} & -R_{c_{l}} & \mathbf{0}_{2\times 2} \\[.1cm] \mathbf{0}_{2\times 4} & \mathbf{0}_{2\times 4}  & \mathbf{0}_{2\times 2} & -R_{c_{m}}
            \end{bmatrix};\; \; \Gamma_{\zeta}:=\begin{bmatrix}
                  I_{2} \\ -I_{2}
                 \end{bmatrix}.\label{Fzeta}
\end{equation}  
Furthermore, from \eqref{schurzeta}, it follows that
\begin{equation}
 \sym\{ F_{\zeta}\}=\begin{bmatrix}
\mathbf{0}_{4\times 4} & \mathbf{0}_{4\times 4} & \mathbf{0}_{4\times 2} & \mathbf{0}_{4 \times 2} \\[.1cm] \mathbf{0}_{4\times 4} & -R_{2} & \frac{1}{2}\Gamma_{\zeta} & \mathbf{0}_{4\times 2}\\[.1cm] \mathbf{0}_{2\times 4} & \frac{1}{2}\Gamma_{\zeta}^{\top} & -R_{c_{l}} & \mathbf{0}_{2\times 2} \\[.1cm] \mathbf{0}_{2\times 4} & \mathbf{0}_{2\times 4}  & \mathbf{0}_{2\times 2}^{\top} & -R_{c_{m}}
                    \end{bmatrix}\leq0. \label{Rzeta}
\end{equation} 
To proof (iii) note that, from \eqref{clzeta} and \eqref{Rzeta}, we have
\begin{equation}
 \begin{array}{rcl}
  \dot{H}_{\zeta}=(\nabla H_{\zeta})^{\top}\dot{\zeta}=(\nabla H_{\zeta})^{\top}F_{\zeta}\nabla H_{\zeta}\leq 0,
 \end{array}
\end{equation} 
which implies that $H_{\zeta}(\zeta)$ is non-increasing. Moreover, 
\begin{equation}
 \begin{array}{rcl}
  z_{l_{*}}=\mathbf{0}_{2} & \implies & (\nabla_{q_{l}}\Phi_{l})_{*}=(\nabla_{x_{c_{l}}}\Phi_{l})_{*}=\mathbf{0}_{2}, \\
  z_{m_{*}}=\mathbf{0}_{2} & \implies & (\nabla_{q_{m}}\Phi_{m})_{*}=(\nabla_{x_{c_{m}}}\Phi_{m})_{*}=\mathbf{0}_{2}.
 \end{array}
\end{equation} 
Therefore,
\begin{equation}
 (\nabla H_{\zeta})_{*}=\begin{bmatrix}
                         (\nabla H)_{*} \\ \mathbf{0}_{2} \\ K_{c}x_{c_{m_{*}}}
                        \end{bmatrix}=\mathbf{0}_{12}. \label{gradzeta}
\end{equation} 
Furthermore, since
\begin{equation*}
 \begin{array}{l}
  \nabla_{q_{l}}^{2}\Phi_{l}=\nabla_{x_{c_{l}}}(\nabla_{q_{l}}\Phi_{l})=\nabla_{q_{l}}(\nabla_{x_{c_{l}}}\Phi_{l})=\nabla_{x_{c_{l}}}^{2}\Phi_{l}\\
 =\diag\{\beta_{l_{1}}\alpha_{l_{1}}\sech^{2}(\beta_{l_{1}}z_{l_{1}}), \beta_{l_{2}}\alpha_{l_{2}}\sech^{2}(\beta_{l_{2}}z_{l_{2}}) \} \\[.1cm]
  \nabla_{q_{m}}^{2}\Phi_{m}=\nabla_{x_{c_{m}}}(\nabla_{q_{m}}\Phi_{m})=\nabla_{q_{m}}(\nabla_{x_{c_{m}}}\Phi_{m})=\nabla_{x_{c_{m}}}^{2}\Phi_{m}\\
 =\diag\{\beta_{m_{1}}\alpha_{m_{1}}\sech^{2}(\beta_{m_{1}}z_{m_{1}}), \beta_{m_{2}}\alpha_{m_{2}}\sech^{2}(\beta_{m_{2}}z_{m_{2}}) \},
 \end{array}
\end{equation*}
some straightforward computations show that
\begin{equation}
 (\nabla^{2} H_{\zeta})_{*}=\begin{bmatrix}
                             K_{\tt S}+A  & \mathbf{0}_{4 \times 4} & A\\
                             \mathbf{0}_{4 \times 4} &  M_{*}^{-1} & \mathbf{0}_{4 \times 4}  \\ A & \mathbf{0}_{4 \times 4} & A+K_{\tt C}
                            \end{bmatrix}>0,\label{hesszeta}
\end{equation}
where
\begin{equation*}
\begin{array}{l}
  A:=\diag\{ \beta_{l_{1}}\alpha_{l_{1}}, \beta_{l_{2}}\alpha_{l_{2}}, \beta_{m_{1}}\alpha_{m_{1}}, \beta_{m_{1}}\alpha_{m_{2}} \},\\[.1cm]
  \begin{array}{rl}
   K_{\tt S}:=\begin{bmatrix}
             K_{s} & -K_{s} \\ -K_{s} & K_{s}
            \end{bmatrix}, & 
K_{\tt C}:=\begin{bmatrix}
              \mathbf{0}_{2\times 2} & \mathbf{0}_{2\times 2} \\ \mathbf{0}_{2\times 2} & K_{c}
             \end{bmatrix}.
  \end{array}
\end{array}
\end{equation*} 
Accordingly, from \eqref{gradzeta} and \eqref{hesszeta}, $\arg\min\{ H_{\zeta}(\zeta) \}=\zeta_{*}$. Thus, the stability property of $\zeta_{*}$ is proved by invoking Lyapunov theory. Moreover, to proof the asymptotic stability property, note that
\begin{equation}
\begin{array}{l}
  \dot{H}_{\zeta}=0 \Longleftrightarrow \left\lbrace\begin{array}{rcl}
   \nabla_{p}H_{\zeta}&=&\mathbf{0}_{4}\Longrightarrow p=\mathbf{0}_{4} \\ \nabla_{x_{c_{l}}}H_{\zeta}&=&\mathbf{0}_{2} \Longrightarrow z_{l}=\mathbf{0}_{2} \\ \nabla_{x_{c_{m}}}H_{\zeta}&=&\mathbf{0}_{2} \Longrightarrow K_{c}x_{c_{m}}=-\nabla_{q_{m}}\Phi_{m}                                   
                                        \end{array}\right.\\[.6cm]
 \Longrightarrow\left\lbrace \begin{array}{rcl}
 \nabla_{q_{l}}\Phi_{l}&=&\mathbf{0}_{2}\\
 \dot{p}_{l}&=&\mathbf{0}_{2}\\ \dot{p}_{m}&=&\mathbf{0}_{2}              
              \end{array} \right\rbrace\Longrightarrow 
 \left\lbrace \begin{array}{rcl}
              -K_{s}(q_{l}-q_{m})&=&\mathbf{0}_{2} \\ x_{c_{m}}&=&\mathbf{0}_{2} 
              \end{array}\right\rbrace \\[.6cm]             
               \Longrightarrow \left\lbrace \begin{array}{rcl}
                                             q_{m}&=&q_{l} \\ 
                                             z_{m}&=&\mathbf{0}_{2}
                                            \end{array} \right\rbrace                            
\Longrightarrow \left\lbrace \begin{array}{l}
                                         q_{m}=q_{l}=q_{*} \\ x_{c_{l}}=\mathbf{0}_{2}.
                                        \end{array}\right.
                                            \end{array}\label{aszeta}
\end{equation} 
Hence, the asymptotic stability property is proved by Barbashin Theorem.
\end{proof}
We corroborate the analytical results of Proposition \ref{prop:sat} through the implementation of the controller in a physical system. Below we report the obtained experimental results. 
\subsubsection*{\bf{Experimental results}}
the controller \eqref{uzeta} is implemented in the robot arm \textit{2 DOF serial flexible joint} by Quanser. We fix the following control parameters.  
\begin{equation}
 \begin{array}{rccccl}
  R_{c_{m}}&=&\diag\{ 25, 25\},&  \beta_{l_{1}}&=&2,\\[.1cm]  R_{c_{l}}&=&\diag\{ 10, 40\}, & \beta_{l_{2}}&=&1,  \\[.1cm] K_{c}&=&\diag\{ 5, 5\}, &  \beta_{m_{i}}&=&1;
 \end{array}\label{gains}
\end{equation}
and we study three cases for different values of the parameters $\alpha_{l_{i}}, \ \alpha_{m_{i}}$ which are shown in Table \ref{acase}. 
\begin{table}[!h]
\caption{Cases for different values of the parameters $\alpha$}
\label{acase}
\vspace{.2cm}
\begin{center}
\begin{tabular}{|c|c|c|}
\hline
Case & values for $\alpha_{l_{i}}$ & values for $\alpha_{m_{i}}$  \\\hline
\textbf{C1} & $0.8$ & $0.4$ \\\hline
\textbf{C2} & $0.2$ & $1$ \\\hline
\textbf{C3} & $0$ & $1.2$ \\\hline
\end{tabular}
\end{center}
\end{table}

The three cases under study illustrate the effect of the term $\nabla_{z_{l}}\Phi_{l}(z_{l}(q_{l},x_{c_{l}}))$ in the behavior of the closed-loop system. 
This term is interpreted as damping in the links of the planar robot, therefore, is expected that as the values $\alpha_{l_{i}}$ increase, the oscillations in the response decrease. 
Accordingly, the experiments for the three cases are carried out under the same initial conditions, $\zeta(0)=\mathbf{0}_{12}$, and the same reference, $q_{*}=(-1,1)$. Figures \ref{fig:az}, \ref{fig:u_azero} depict the
results of case \textbf{C1}, Figures \ref{fig:as}, \ref{fig:u_as} show the results of case \textbf{C2}; and Figures \ref{fig:al}, \ref{fig:u_al} correspond to the results of case \textbf{C3}. From the aforementioned plots,
we conclude the following:
\begin{itemize}
 \item The existence of steady state error can be observed for all the cases. This situation is probably a consequence of phenomena that are not taken into account in the model, e.g., nonlinear friction terms. Moreover, from the experiments,
 we notice the actuators are not able to provoke any displacement when $u_{i}\in[-0.12,0.12]$. Therefore, the 
 positions remain constant, even when the control signals are different from zero. This is particularly notorious in Figure \ref{fig:u_al}.
 \item There exists a trade off between the damping injected to the links and the magnitude of the steady state error. Furthermore,
 a similar relationship takes place between the magnitude of $\alpha_{l_{i}}$ and the oscillations, where, a greater magnitude of these values yields into an important attenuation in the oscillations.
 Both relations can be noticed in Figures \ref{fig:al},
 \ref{fig:as} and \ref{fig:az}.
 \end{itemize}
\begin{figure}
 \centering
 \includegraphics[width=.4\textwidth]{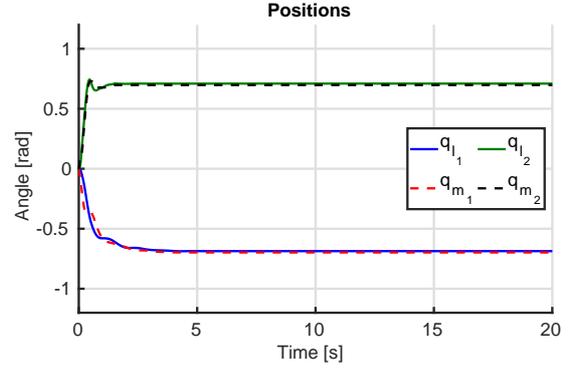}
 \caption{Behavior of the positions for \textbf{C1}.}
 \label{fig:al}
\end{figure}
\begin{figure}
 \centering
 \includegraphics[width=.4\textwidth]{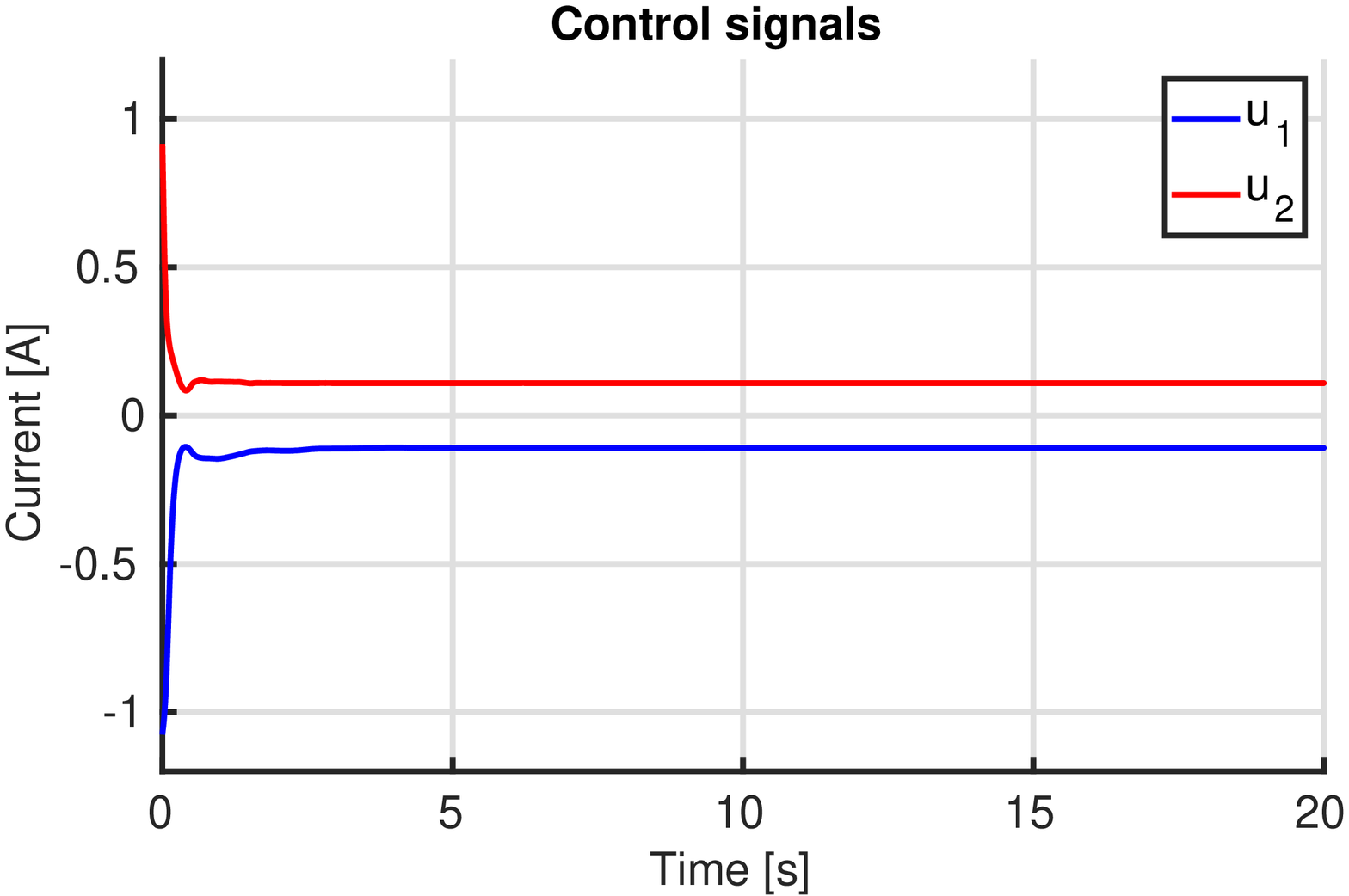}
 \caption{Control signals for \textbf{C1}.}
 \label{fig:u_al}
\end{figure}
\begin{figure}
 \centering
 \includegraphics[width=.4\textwidth]{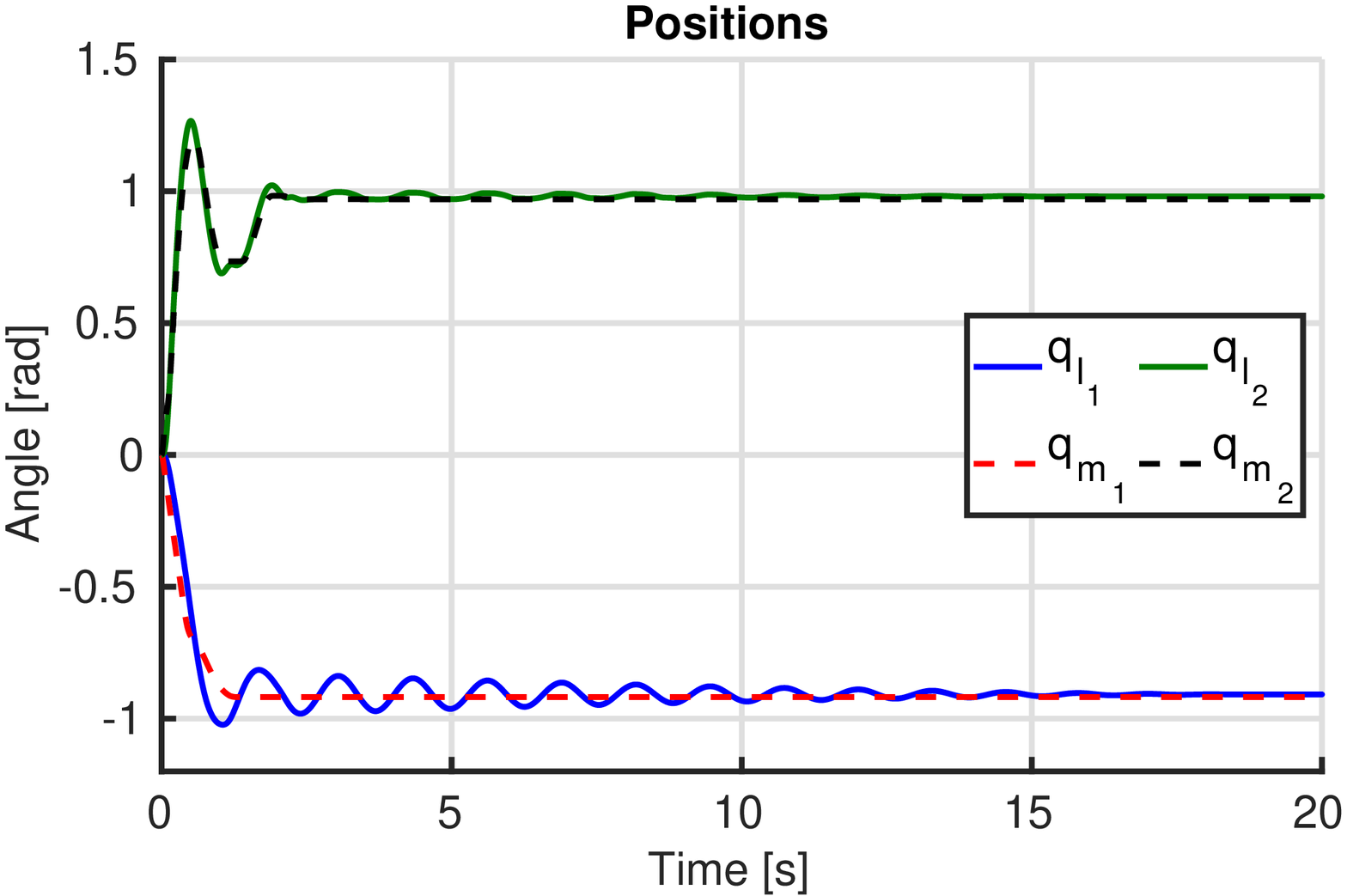}
 \caption{Behavior of the positions for \textbf{C2}.}
 \label{fig:as}
\end{figure}
\begin{figure}
 \centering
 \includegraphics[width=.4\textwidth]{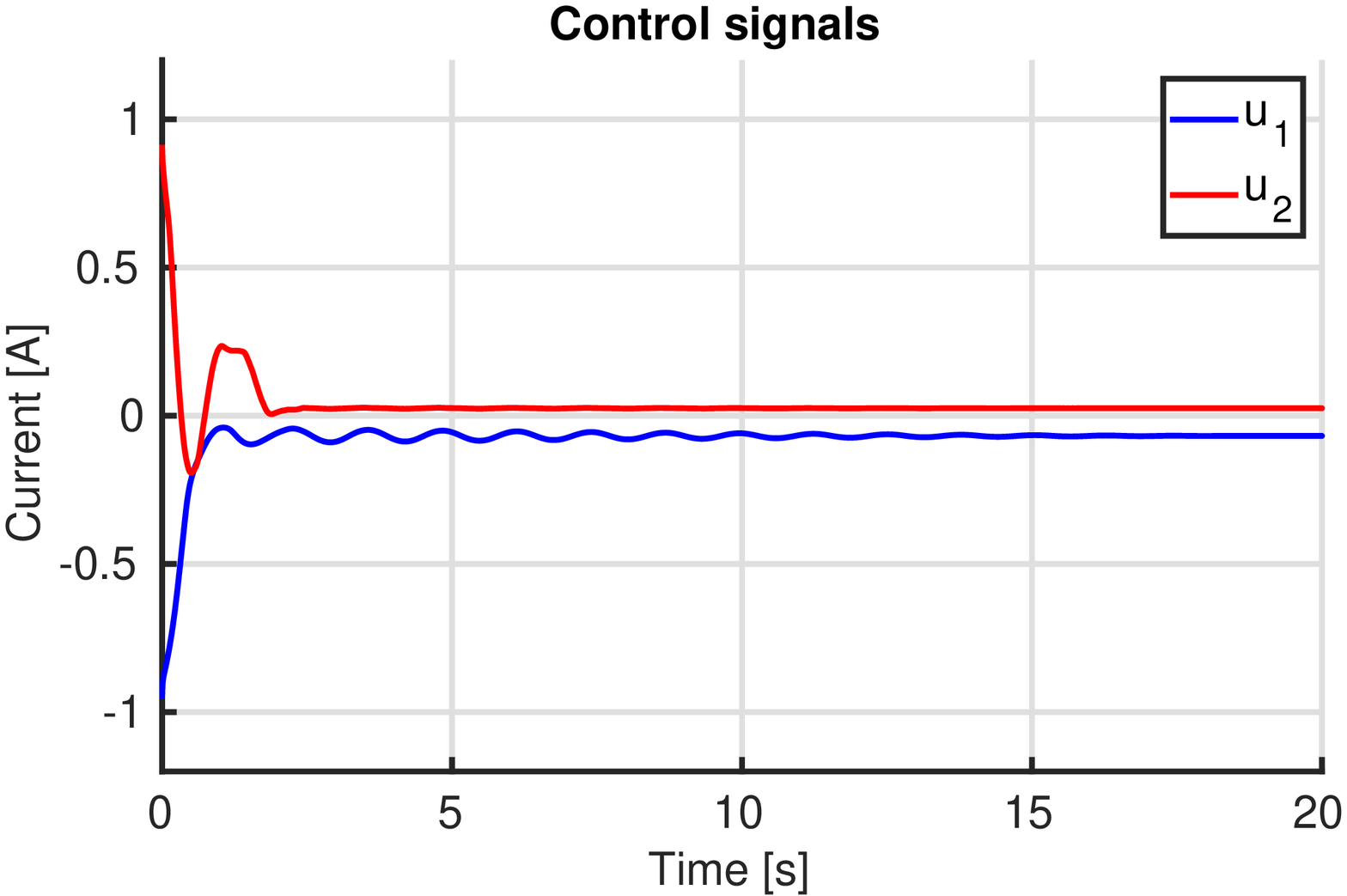}
 \caption{Control signals for \textbf{C2}.}
 \label{fig:u_as}
\end{figure}
\begin{figure}
 \centering
 \includegraphics[width=.4\textwidth]{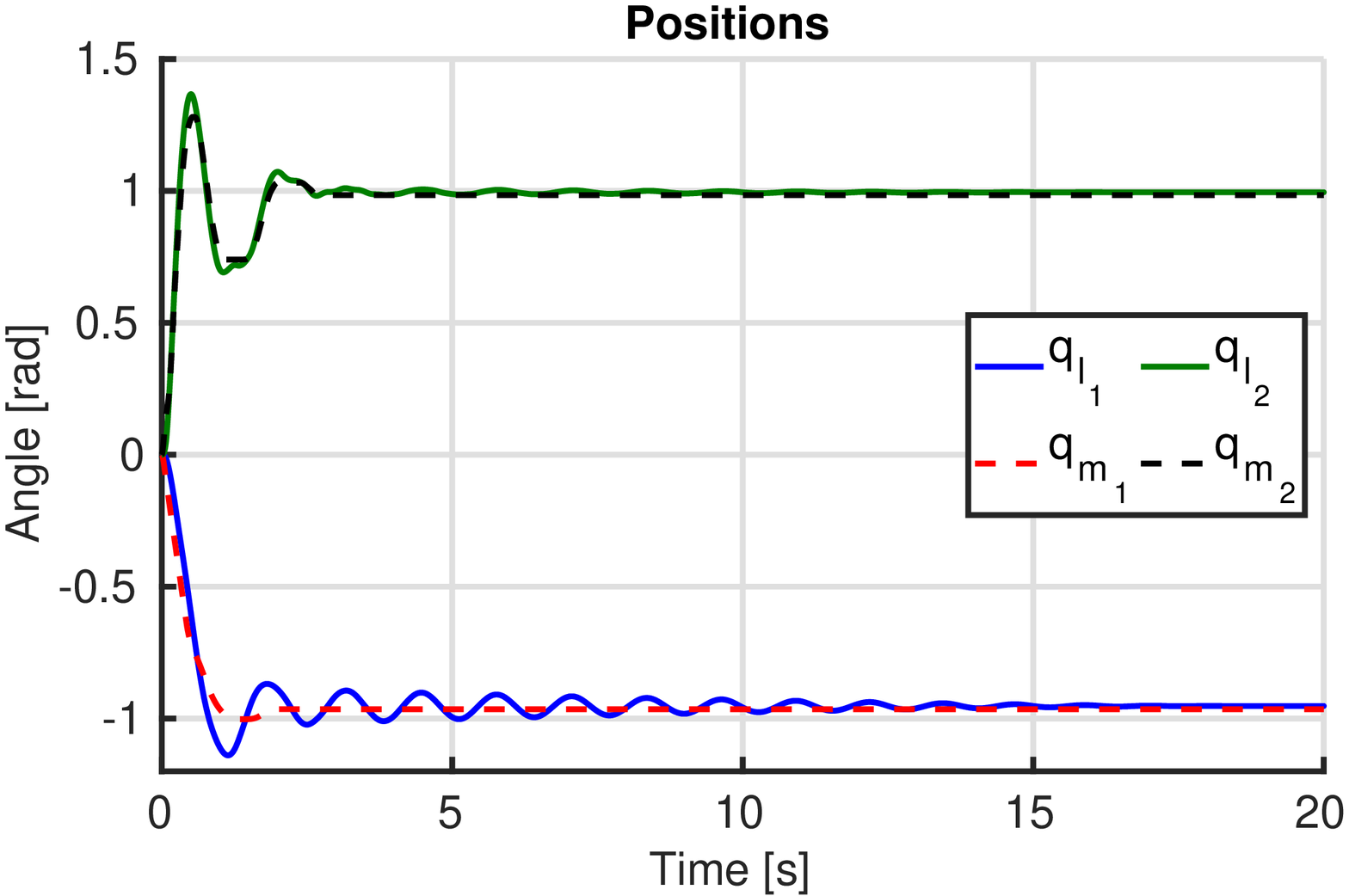}
 \caption{Behavior of the positions for \textbf{C3}.}
 \label{fig:az}
\end{figure}
\begin{figure}
 \centering
 \includegraphics[width=.4\textwidth]{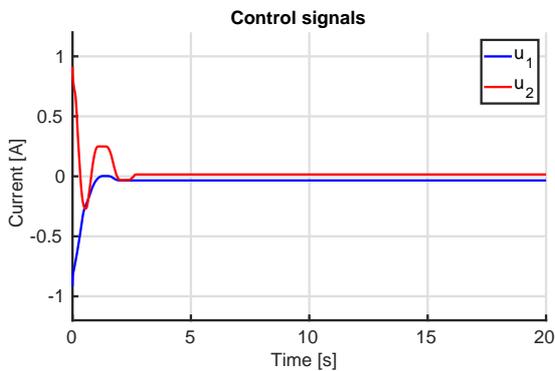}
 \caption{Control signals for case \textbf{C3}.}
 \label{fig:u_azero}
\end{figure}
From a theoretical point of view, the control law \eqref{uzeta} solves the problem stated in Section \ref{sec:prem}, that is, the controller developed in this section is saturated and does not depend on the measurement 
of velocities. Nonetheless, as is reported above, a new issue has arisen during the implementation of the controller, namely, the presence of steady state error. In the following subsection we propose an alternative to 
overcome this problem. 

\subsection{Eliminating the steady state error}
Customarily, the steady state error is eliminated by the negative feedback of an integral term of the error between some measurements and the reference. In our case, such an error is the difference between the positions of the motors and the reference, that is, $(q_{m}-q_{*})$. Moreover, the derivative of this error is given by $\dot{q}_{m}$, which is the passive output of the system. Hence, following the approach adopted in this note, the term that eliminates the steady state error is given by a double integrator of the passive output. Accordingly, Proposition \ref{prop:int} 
provides an alternative control law that includes a double integral-like term, which eradicates the steady state error and ensures that the control signals remain saturated, alas, the pH structure is not preserved anymore.

\begin{proposition}\label{prop:int}
Consider the vector state $\sigma\in \rea_{2}$ whose dynamics are given by
\begin{equation}
 \dot{\sigma}=\nabla^{2} \Phi_{\sigma}(\sigma)(q_{m}-q_{*})-K_{\sigma}\sigma,
\end{equation}
where the matrix $K_{\sigma}\in \rea^{2\times 2}$ and the function $\Phi_{\sigma}:\rea^{2}\to \rea$ are defined as
\begin{eqnarray}
 \Phi_{\sigma}(\sigma)&:=&\displaystyle\sum_{i=1}^{2}\frac{\alpha_{\sigma_{i}}}{\beta_{\sigma_{i}}}\ln(\cosh(\beta_{\sigma_{i}}\sigma_{i}))
 \label{Phis}\\
 K_{\sigma}&:=& \diag\{k_{\sigma_{1}}, k_{\sigma_{2}} \} \label{ksig}
\end{eqnarray} 
with $\alpha_{\sigma_{i}},\beta_{\sigma_{i}}, k_{\sigma_{i}}\in \rea_{>0}$.
Consider $F_{\zeta}$, given in \eqref{Fzeta} and assume that $R_{c_{l}}$ verifies \eqref{schurzeta}. Consider the control law 
\begin{equation}
 \begin{array}{rcl}
  u&=&-\nabla_{z_{c_{l}}}\Phi_{l}(z_{l}(q_{l},x_{c_{l}}))-\nabla_{z_{c_{m}}}\Phi_{m}(z_{m}(q_{m},x_{c_{m}}))\\ &&-\nabla \Phi_{\sigma}(\sigma),
 \end{array}\label{uxi}
\end{equation}
with $z_{l},\ z_{m},\ \Phi_{l},\ \Phi_{m}$ defined as in \eqref{zPhi}. Fix a reference $q_{*}$ and define the matrices
\begin{equation}
 \begin{array}{l}
  A_{\sigma}:=\diag\{\beta_{\sigma_{1}}\alpha_{\sigma_{1}}, \beta_{\sigma_{2}}\alpha_{\sigma_{2}}\}; \\[.2cm]
  \begin{array}{rl}
   A_{\xi_{1}}:=\begin{bmatrix}
\mathbf{0}_{6} \\ -A_{\sigma} \\                                                                                                                   
              \mathbf{0}_{4}                                                                                          \end{bmatrix}, & A_{\xi_{2}}:=\begin{bmatrix}
\mathbf{0}_{2} \\ A_{\sigma} \\                                                                                                                   
              \mathbf{0}_{8}                                                                                          \end{bmatrix}, 
  \end{array}
  \\[.6cm]
  \mathcal{A}:=\begin{bmatrix}
             F_{\zeta}(\nabla^{2} H_{\zeta})_{*} & A_{\xi_{1}} \\ A_{\xi_{2}}^{\top} & -K_{\sigma}   
              \end{bmatrix},
 \end{array}\label{A}
\end{equation}
where $(\nabla^{2} H_{\zeta})_{*}$ is given in \eqref{hesszeta}.
Then:
\begin{itemize}
 \item[(i)] The elements of the input vector $u$ are saturated.
 \item[(ii)] Consider system \eqref{sys} in closed-loop with \eqref{uxi}. If the matrix $\mathcal{A}$ is Hurwitz, then $\xi_{*}=[q_{*}^{\top} \ q_{*}^{\top} \ \mathbf{0}_{10}^{\top}]^{\top}$ 
is a (\textit{locally}) asymptotically stable equilibrium point of the closed-loop system.
\end{itemize}
\end{proposition}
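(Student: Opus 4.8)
The plan is to follow the template of the proof of Proposition \ref{prop:sat}: first bound $u$ from the explicit form of the $\tanh$ terms, then check that $\xi_{*}$ annihilates the closed-loop vector field, and finally invoke Lyapunov's indirect method after identifying the linearization at $\xi_{*}$ with the matrix $\mathcal{A}$ of \eqref{A}. For (i), differentiating \eqref{Phis} gives $\nabla\Phi_{\sigma}(\sigma)=[\alpha_{\sigma_{1}}\tanh(\beta_{\sigma_{1}}\sigma_{1}),\ \alpha_{\sigma_{2}}\tanh(\beta_{\sigma_{2}}\sigma_{2})]^{\top}$, so, exactly as in the proof of Proposition \ref{prop:sat}, each component of \eqref{uxi} is a sum of three bounded hyperbolic-tangent terms and $|u_{i}|\leq\alpha_{l_{i}}+\alpha_{m_{i}}+\alpha_{\sigma_{i}}$, which proves (i).

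Next I would assemble the augmented state $\xi:=[\zeta^{\top},\sigma^{\top}]^{\top}\in\rea^{14}$, with $\zeta$ as in Proposition \ref{prop:sat}. The only difference between the closed loop \eqref{sys}--\eqref{uxi} and the closed loop \eqref{sys}--\eqref{uzeta} is the extra term $-\nabla\Phi_{\sigma}(\sigma)$ entering $\dot{p}_{m}$ through $u$; hence, using \eqref{clzeta}, the closed-loop dynamics read $\dot{\zeta}=F_{\zeta}\nabla H_{\zeta}(\zeta)+[\mathbf{0}_{6}^{\top},\ -\nabla\Phi_{\sigma}(\sigma)^{\top},\ \mathbf{0}_{4}^{\top}]^{\top}$ together with $\dot{\sigma}=\nabla^{2}\Phi_{\sigma}(\sigma)(q_{m}-q_{*})-K_{\sigma}\sigma$. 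Evaluating at $\xi_{*}$ one has $z_{l}=z_{m}=\mathbf{0}_{2}$ and $\sigma=\mathbf{0}_{2}$, so $\nabla_{z_{l}}\Phi_{l}$, $\nabla_{z_{m}}\Phi_{m}$ and $\nabla\Phi_{\sigma}$ all vanish; together with $(\nabla H_{\zeta})_{*}=\mathbf{0}_{12}$ from \eqref{gradzeta} this gives $\dot{\xi}|_{\xi_{*}}=\mathbf{0}_{14}$, so $\xi_{*}$ is indeed an equilibrium.

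Then I would linearize about $\xi_{*}$. Since $F_{\zeta}$ is constant, the Jacobian of $F_{\zeta}\nabla H_{\zeta}(\zeta)$ at $\xi_{*}$ is $F_{\zeta}(\nabla^{2}H_{\zeta})_{*}$, with $(\nabla^{2}H_{\zeta})_{*}$ as in \eqref{hesszeta}. The partial derivative of the extra term with respect to $\sigma$ is $-\nabla^{2}\Phi_{\sigma}(\mathbf{0}_{2})=-A_{\sigma}$, located in the $p_{m}$-rows of $\zeta$, which is exactly the block $A_{\xi_{1}}$. For the $\sigma$-equation, $\nabla^{2}\Phi_{\sigma}$ depends only on $\sigma$, so $\partial\dot{\sigma}/\partial q_{m}|_{\xi_{*}}=\nabla^{2}\Phi_{\sigma}(\mathbf{0}_{2})=A_{\sigma}$, while the term multiplying $(q_{m}-q_{*})$ drops out of $\partial\dot{\sigma}/\partial\sigma|_{\xi_{*}}$ because $(q_{m}-q_{*})_{*}=\mathbf{0}_{2}$, leaving $-K_{\sigma}$; as $A_{\xi_{2}}^{\top}$ selects the $q_{m}$-coordinates of $\zeta$, the $\sigma$-row of the Jacobian is $[A_{\xi_{2}}^{\top},\ -K_{\sigma}]$. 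Collecting the blocks, the linearization at $\xi_{*}$ is $\delta\dot{\xi}=\mathcal{A}\,\delta\xi$ with $\mathcal{A}$ as in \eqref{A}, so if $\mathcal{A}$ is Hurwitz, Lyapunov's indirect method (see \cite{KHA}) yields local asymptotic stability of $\xi_{*}$, proving (ii).

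The step I expect to be the main obstacle is the bookkeeping in the linearization: correctly placing the $-\nabla\Phi_{\sigma}(\sigma)$ term in the $p_{m}$-block (which produces $A_{\xi_{1}}$), exploiting that $F_{\zeta}$ is state-independent so that the $\zeta$-block of the Jacobian is just $F_{\zeta}(\nabla^{2}H_{\zeta})_{*}$ rather than something involving $\nabla F_{\zeta}$, and checking that the cross term $\nabla^{2}\Phi_{\sigma}(\sigma)(q_{m}-q_{*})$ contributes only $A_{\sigma}(q_{m}-q_{*})$ to first order so that the selection matrix $A_{\xi_{2}}$ appears in the right position. The remaining computations are the routine verifications already performed in Proposition \ref{prop:sat}.
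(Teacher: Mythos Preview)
Your proposal is correct and follows essentially the same route as the paper: the saturation bound via the explicit $\tanh$ form of $\nabla\Phi_{\sigma}$ combined with item (i) of Proposition~\ref{prop:sat}, and then Lyapunov's indirect method after identifying the linearization at $\xi_{*}$ with $\mathcal{A}$. The paper simply states that ``lengthy but straightforward computations'' yield $\dot{\bar{\xi}}=\mathcal{A}\bar{\xi}$, whereas you spell out the block-by-block bookkeeping (equilibrium check, constancy of $F_{\zeta}$, placement of $A_{\xi_{1}}$ in the $p_{m}$-rows, and the vanishing of the $\sigma$-dependence of $\nabla^{2}\Phi_{\sigma}(\sigma)(q_{m}-q_{*})$ at $\xi_{*}$), which is exactly the content the paper omits.
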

\begin{proof}
To proof (i), note that
\begin{equation}
 \nabla\Phi_{\sigma}=\begin{bmatrix}
                  \alpha_{\sigma_{1}}\tanh(\beta_{\sigma_{1}}\sigma_{1}) \\ \alpha_{\sigma_{2}}\tanh(\beta_{\sigma_{2}}\sigma_{2}) \label{nPhis}
                 \end{bmatrix}.
\end{equation} 
Furthermore, from the proof of item (i) in Proposition \ref{prop:sat} and \eqref{nPhis}, we get
 \begin{equation*}
\begin{array}{rcl}
  u_{i}&=&-\alpha_{l_{i}}\tanh(\beta_{l_{i}}z_{l_{i}})-\alpha_{m_{i}}\tanh(\beta_{m_{i}}z_{m_{i}})\\&&-\alpha_{\sigma_{i}}\tanh(\beta_{\sigma_{i}}\sigma_{i})
\end{array}
\end{equation*} 
Moreover, 
\begin{equation*}
 -(\alpha_{l_{i}}+\alpha_{m_{i}}+\alpha_{\sigma_{i}})\leq u_{i} \leq \alpha_{l_{i}}+\alpha_{m_{i}}+\alpha_{\sigma_{i}}.
\end{equation*} 
To proof (ii), define the new state space $\xi:=[\zeta^{\top} \ \sigma^{\top}]^{\top}$, and the error $\bar{\xi}:=\xi-\xi_{*}$. Then, some lengthy but straightforward computations show that the linearization of closed-loop system, around $\xi_{*}$, is given by
\begin{equation*}
 \dot{\bar{\xi}}=\mathcal{A}\bar{\xi}.
\end{equation*} 
The proof is completed by applying Lyapunov's Indirect Method, see Chapter 4 of \cite{KHA}. 
\end{proof}
While the main theoretical contributions of this document are the results of Proposition \ref{prop:sat}, with the implementation of the control law \eqref{uxi} the closed-loop system exhibits a better performance in terms of steady state error and oscillations. 
Below, we report the experimental results of this implementation.
\subsubsection*{\bf{Experimental results}}
the control law \eqref{uxi} is implemented in the robot arm. Towards this end, we select the control matrices
as
\begin{equation}
 \begin{array}{rclrcl}
R_{c_{l}}&=&\diag\{ 25, 40\}, &  K_{c}&=&0.1I_{2}, \\[.1cm] R_{c_{m}}&=&0.25I_{2}, & K_{\sigma}&=&I_{2},\end{array}
\end{equation} 
The rest of the control parameters is given in Table \ref{cpar}. 
\begin{table}[!h]
\caption{Control parameters}
\label{cpar}
\vspace{.2cm}
\begin{center}
\begin{tabular}{|c|c|c|}
\hline
$\alpha_{l_{1}}=0.6$ & $\alpha_{l_{2}}=0.3$ & $\alpha_{m_{1}}=0.25$  \\\hline
$\alpha_{m_{2}}=0.6$ & $\alpha_{\sigma_{1}}=0.35$ & $\alpha_{\sigma_{2}}=0.3$ \\\hline
$\beta_{l_{1}}=3$ & $\beta_{l_{2}}=1$ & $\beta_{m_{1}}=1$ \\\hline
$\beta_{m_{2}}=1$ & $\beta_{\sigma_{1}}=2.5$ & $\beta_{\sigma_{2}}=3$ \\\hline
\end{tabular}
\end{center}
\end{table}

Moreover, we fix the reference $q_{*}=(-1,1)$, and we carry out the experiments under initial condition $\xi(0)=\mathbf{0}_{14}$.
Note that with this selection of the parameters of the controller, we ensure that the matrix $\mathcal{A}$, defined in \eqref{A}, is Hurwitz and consequently the point $\zeta_{*}$ is an asymptotically stable equilibrium of the closed-loop system. 
Figure \ref{fig:p_int} depicts the positions of the links and motors, where the error of the final position with respect to the desired reference is almost zero. On the other hand, Figure \ref{fig:u_int} shows the control signals, where the value of both signals is close to zero. From the plots we conclude the implementation of controller \eqref{uxi} has in overall a better performance than controller \eqref{uzeta}. Moreover, in this case the trade off between the attenuation of the oscillations and the steady error seems to have been removed. The errors that persist can be result of several factors, e.g., slips in the motors, numerical errors, the tuning of the gains. 

\begin{figure}
 \centering
 \includegraphics[width=.45\textwidth]{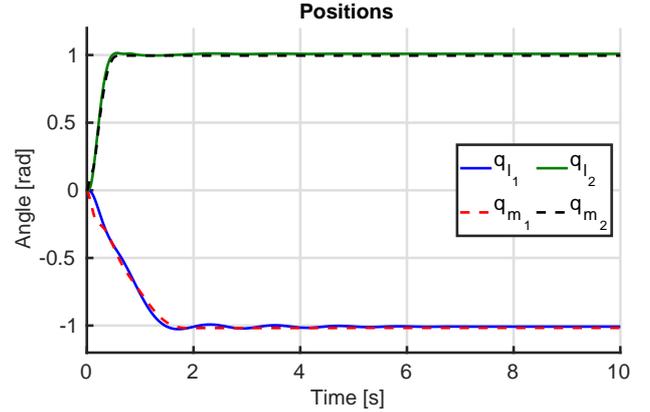}
 \caption{Behavior of the positions with the addition of an integral-like term.}
 \label{fig:p_int}
\end{figure}
\begin{figure}[!h]
 \centering
 \includegraphics[width=.45\textwidth]{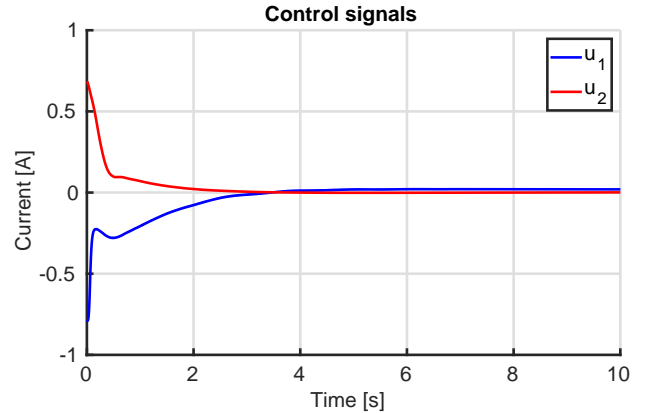}
 \caption{Control signals with the addition of an integral-like term.}
 \label{fig:u_int}
\end{figure}

\section{Conclusions and future work}
\label{sec:cfw} 
We present a controller that solves the stabilization problem for planar robots with flexible joints. Additionally, the control law is designed without the necessity of solving PDEs nor velocity measurements; and the control signals are saturated. Furthermore,
the controller is designed for the two links case, nonetheless, can be extended to the case of $n$ links. 

We remark the fact that by considering the natural damping of the system we are able to inject damping in the links and preserve the pH structure.

For implementation purposes, we add an extra term dependent on a virtual state which is similar to an integral of the error $(q_{m}-q_{*})$. This extra term improve the performance of the closed-loop system with the elimination of the steady state error and the attenuation of the oscillations in its behavior. Furthermore, after the addition of this extra term, the property of saturation in the control law is preserved.

As future work, we aim to extend the proposed methodology to pH systems, in different domains, that can be stabilized with PI-PBC, e.g., electrical circuits, electro-mechanical systems or fluid systems.
\appendix
Table \ref{parameters} contains the information about the physical parameters of the robot arm \textit{2 DOF serial flexible joint} by Quanser. These parameters were taken from the datasheet of the robot and \cite{MIR}.
\begin{table}[!h]
\caption{System parameters}
\label{parameters}
\small
\begin{center}
\begin{tabular}{|c|c|c|c|}
\hline
Parameter & Physical meaning & Value & Units\\\hline
$a_{1}$ & × & $0.148$ & $\left[kg\cdot m^{2} \right]$\\\hline
$a_{2}$ & × & $0.073$ & $\left[kg\cdot m^{2} \right]$\\\hline
$b$ & × & $0.086$ & \small{$\left[kg\cdot m^{2} \right]$}\\\hline
$\mathcal{I}_{m_{1}}$ & MoI of motor $1$& $0.217$ & $\left[kg\cdot m^{2} \right]$\\\hline
$\mathcal{I}_{m_{2}}$ & MoI of motor $2$& $0.007$ & $\left[kg\cdot m^{2} \right]$\\\hline
$D_{l_{1}}$ & Damping on link $1$& $0.038$ & $\left[N\cdot m\cdot s / rad\right]$ \\\hline
$D_{l_{2}}$ & Damping on link $2$& $0.03$ & $\left[N\cdot m\cdot s / rad\right]$ \\\hline
$D_{m_{1}}$ & Damping on motor 1& $8.435$ & $\left[N\cdot m\cdot s / rad\right]$\\\hline
$D_{m_{2}}$ & Damping on motor 2& $0.136$ & $\left[N\cdot m\cdot s / rad\right]$\\\hline
$k_{s_{1}}$ & Spring constant 1 & $9$ & $\left[N \cdot m/rad\right]$\\\hline
$k_{s_{2}}$ & Spring constant 2 & $4$ & $\left[N \cdot m/rad\right]$ \\\hline
\end{tabular}
\end{center}
\normalsize
\end{table}

The experiments reported in Section \ref{sec:cd} were carried out using the Matlab/Simulink interface. Where, 
\begin{equation*}
 u_{\tt max_{1}}=u_{\tt max_{2}}=1.2[A].
\end{equation*} 
Note that, the saturation value is given in terms of the currents supplied to the motors, nonetheless, these currents satisfy a linear relationship with the torques of the motors. The details of these linear relations are provided by Quanser.  
\bibliographystyle{IEEEtran}
\bibliography{bibliographyifac}

\end{document}